\documentclass[journal]{IEEEtran}
\usepackage{amsmath,amssymb,amsthm,graphicx,cite}
\graphicspath{{figs/}}
\newtheorem{theorem}{Theorem}
\newtheorem{corollary}{Corollary}
\newcommand{\pstar}{p^{\star}}
\newcommand{\qstar}{q^{\star}}
\newcommand{\lsn}{\lambda_{\mathrm{SN}}}

\begin{document}

\title{An Exact Componentwise Voltage-Monotonicity\\
Threshold for Radial Distribution Networks}

\author{Marian~Me\v{s}ter% <-this % stops a space
\thanks{M.~Me\v{s}ter is with the Technical University of Ko\v{s}ice, Slovakia (\texttt{marian.mester@tuke.sk}).}%
\thanks{Manuscript prepared July 2026. The protocol code, the written pre-registrations of the falsification, convergence and Volt-VAR runs, the deviations log, and the complete run archives are permanently deposited at Zenodo (DOI \texttt{10.5281/zenodo.21407024}). The author used Anthropic Claude (a large language model) for language editing and as an auditing aid in verifying the numerical protocol; all analysis, theorems, proofs, numerical results, and conclusions are the author's own and were verified independently.}}

\maketitle

\begin{abstract}
Distribution-network operating tools routinely assume that voltage magnitudes respond \emph{monotonically} to nodal power injections, so that checking the extreme points of an operating envelope certifies its interior. That the voltage may instead \emph{fall} under rising active export is documented---measured on feeders and explained through phasor loci and P--V maxima---but its exact onset has not been placed in closed form. We derive the exact sign of the single-branch voltage sensitivity and show that componentwise monotonicity in active power reverses precisely at $\pstar = g\,v$, while the reactive threshold $\qstar = b_{a}\,v$ sits at short-circuit scale and is physically unreachable; the two differ by the branch $x/r$ ratio. A test protocol whose scenario grid, tolerances and acceptance criteria were fixed before the runs, on two independent networks ($1594$ cells, $26\,000$ ordered injection pairs), corroborated on a second real network and a conductance sweep, places every observed violation at the low-$R/X$ station branch, and a per-cell attribution separates threshold crossings from a second, loss-mediated coupling that reverses sensitivities while every individual branch margin is still positive. Behind a regulated busbar the entire protocol passes, so a network-scale sufficiency condition is stated as an explicit, evidence-supported hypothesis. Because every power-flow model affine in the injections is componentwise monotone by construction, screening built on such models cannot detect this failure. The threshold reads operationally as a per-branch monotonicity margin, giving distribution operators a closed-form flag for where extreme-point screening ceases to be valid.
\end{abstract}

\begin{IEEEkeywords}
Distribution networks, distributed generation, load flow, monotone systems, voltage sensitivity, hosting capacity.
\end{IEEEkeywords}

\section{Introduction}
\IEEEPARstart{M}{any} operational and planning functions assume that voltage magnitudes respond \emph{monotonically} to nodal power injections: interval-based screening, robust solvability certificates \cite{nguyen2019}, and hosting-capacity rules for distributed energy resources (DER) declare an operating envelope feasible whenever its extreme points are \cite{miu2000}. The value of the assumption is computational: if voltage is monotone in each injection over an operating box, the whole box is certified by its corners, and the cost of checking a feasibility or hosting question collapses from a dense sweep to a handful of evaluations.

The rigorous monotonicity results certify strictly less than this practice assumes. Dvijotham \emph{et al.} \cite{dvijotham2017} prove monotonicity in the \emph{rotated} injections $\tilde q = q + \kappa p$ at fixed $\tilde p = p - \kappa q$---a cone-order result, not a componentwise one. Miu and Chiang \cite{miu2000} prove monotonicity of the feasible radial three-phase solution \emph{along a single load-change ray}, and already exhibit a counterintuitive example in which a load decrease lowers a phase voltage; their proof turns on a scalar $g(\lambda)=s\lambda^2+t\lambda$ with $s\propto(r^2+x^2)$, the same impedance combination that governs the threshold below. Wang, Chiang and Guo \cite{wcg2017} give, for unbalanced networks, \emph{sufficient conditions under which} the phase-voltage magnitude is monotone in local active and reactive injections---angular inequalities that certify a region in which monotonicity \emph{holds}, and which their own counterexamples show can be violated, but without locating the onset of that violation. Linear models---LinDistFlow \cite{baran1989} and the error-bounded model of Bolognani and Zampieri \cite{bolognani2016}---are componentwise monotone \emph{by construction}: both are affine in $(p,q)$ with $\partial v/\partial p \ge 0$ entrywise on a radial tree, so there monotonicity is a property of the approximation, not of the network.

A second, physical line has documented the reversal itself. Iioka \emph{et al.} \cite{iioka2019} \emph{measure} voltage reduction under reverse power flow on 6.6\,kV feeders and rationalise it with a two-node P--V analysis, locating a maximum-voltage point along the injection trajectory. Matsumura \emph{et al.} \cite{matsumura2018} describe the same rise-then-fall through phasor diagrams. Das \emph{et al.} \cite{das2022} find, for an inverter constrained to $P^2+Q^2=S^2$, that the terminal voltage is non-monotone in the $P/Q$ ratio, with a stationary point at $P/Q=r/x$. A power-transfer line reaches the same two-bus relation from the loadability side: Vournas \cite{vournas2015} rotates the injections by the loss angle and gives the reactive support for maximum power transfer at a voltage-controlled bus---exactly the reactive threshold $\qstar$ below---while Deakin \emph{et al.} \cite{deakin2018} take a marginal loss-induced maximum-power-transfer limit along the upper voltage limit.

What remains open is narrow but consequential. None of these results gives the \emph{exact componentwise} sign boundary of $\partial|V|/\partial p$ at \emph{fixed} reactive injection, in closed form, together with its reactive counterpart and the asymmetry between them. The measured and P--V accounts \cite{iioka2019,matsumura2018,das2022} differentiate along an injection \emph{trajectory} (a fixed power factor, or the apparent-power circle), which is the total derivative $\mathrm{d}v/\mathrm{d}p = \partial v/\partial p + \alpha\,\partial v/\partial q$, not the partial that an interval screen over an axis-aligned box actually needs; the sufficient-condition results \cite{miu2000,wcg2017} certify \emph{where monotonicity holds} rather than locating \emph{where it fails}, and do so through angular inequalities rather than a closed scalar. This paper settles the partial, componentwise question exactly.

The relation to Wang, Chiang and Guo \cite{wcg2017} is worth stating precisely, as it is the closest prior work. They prove \emph{sufficiency}: a set of angular inequalities under which monotonicity holds, valid for the general unbalanced three-phase network. Theorem~\ref{thm:sign} proves, for the balanced single branch, an \emph{exact sign}: componentwise monotonicity holds if and only if $p < g\,v_1$. The two are complementary rather than nested. Their result is more general in setting (unbalanced, multi-branch) but certifies only a subset of the true monotone region and is silent on where monotonicity fails; ours is narrower in setting (balanced, per branch) but gives the exact boundary---necessary and sufficient---together with the reactive counterpart $\qstar = b_a v_1$ and the $x/r$ asymmetry between them, none of which a sufficient-only condition can provide. In particular, on the balanced branch the sign of $\partial v_1/\partial p$ equals $\operatorname{sign}(g\,v_1 - p)$ identically (verified at $16\,000$ operating points, zero sign mismatches), so no sufficient condition can be sharper there. This paper is therefore not a special case of \cite{wcg2017}, nor a generalization of it; it settles a question their framework poses but does not answer.

A word on scope. The analysis is balanced and per branch, whereas distribution networks are unbalanced and the sufficient-condition results \cite{miu2000,wcg2017} are three-phase. This is deliberate, not a limitation of convenience: the reversal is an \emph{impedance} phenomenon, not a \emph{phase} one. The threshold $\pstar = g\,v_1 = (r/|z|^2)\,v_1$ depends only on the series $r,x$ of the branch and the local magnitude; it carries no term in phase imbalance or load allocation, and numerically the onset tracks $g\,v$ to within $0.00\,\%$ across the reactive loading $q$ (which shifts $|V_1|$ but not the mechanism). In an unbalanced network the same reversal therefore appears per phase on the same low-$\kappa$ branch, driven by that phase's series impedance; the unbalance of \cite{miu2000,wcg2017} adds phase coupling on top of a mechanism that the balanced branch already isolates in closed form. Treating the balanced branch first is thus the natural way to expose the sign structure without the phase entanglement obscuring it. The contribution is threefold. (i) An exact sign formula for the single-branch voltage sensitivity yields the closed-form thresholds $\pstar = g\,v$ and $\qstar = b_{a}\,v$, with the $x/r$ asymmetry that makes the reactive one unreachable, and a phasor reading of the mechanism (Sections~\ref{sec:thm}--\ref{sec:physics}). (ii) Evidence from a protocol fixed before the runs ($1594$ cells, $26\,000$ ordered pairs), with a four-step check and a per-cell attribution, locates every violation at the low-$R/X$ station branch and separates the branch threshold from a loss-mediated network coupling (Sections~\ref{sec:evidence}--\ref{sec:limits}). (iii) A reading of the threshold as an operational monotonicity margin, delimiting where extreme-point screening remains valid and where full nonlinear evaluation is required (Section~\ref{sec:implications}).

\section{An Exact Sign Theorem}\label{sec:thm}
Consider a slack bus $0$ with $V_0 = |V_0|\angle 0$ and $v_0 := |V_0|^2$, connected through a series impedance $z = r + \mathrm{j}x$ ($r \ge 0$, $x>0$, $|z|^2 = r^2+x^2$) to bus $1$ with net complex injection $s = p + \mathrm{j}q$ (load $p<0$, generation/export $p>0$). Let $v_1 := |V_1|^2$. With $I_1 = (V_1 - V_0)/z$ and $s = V_1 \overline{I_1}$, a direct computation gives $(p+\mathrm{j}q)\bar z = v_1 - V_1\overline{V_0}$; writing $a := rp + xq$ and $b := qr - px$ and taking squared magnitudes yields the exact relation
\begin{equation}\label{eq:R}
v_1 v_0 \;=\; (v_1 - a)^2 + b^2 ,
\end{equation}
a quadratic in $v_1$ with discriminant $\Delta := (v_0 + 2a)^2 - 4(a^2+b^2)$ and $a^2 + b^2 = |s|^2 |z|^2$. Equation \eqref{eq:R} is the classical two-bus relation \cite{vournas2015,deakin2018,iioka2019}; in the rotated coordinates of \cite[eqs.~(7)--(9)]{vournas2015} the locus $\Delta = 0$ is the loadability parabola. The high-voltage (HV, practical) solution is the larger root, on which $2v_1 - (v_0+2a) = \sqrt{\Delta} > 0$.

\begin{theorem}[Sign of the voltage sensitivities]\label{thm:sign}
On the HV solution of \eqref{eq:R}, for any $v_0 > 0$,
\begin{align}
\frac{\partial v_1}{\partial p} &= \frac{2\bigl(r\,v_1 - p\,|z|^2\bigr)}{\sqrt{\Delta}}
 \;=\; \frac{2|z|^2}{\sqrt{\Delta}}\bigl(g\,v_1 - p\bigr), \label{eq:dvdp}\\
\frac{\partial v_1}{\partial q} &= \frac{2\bigl(x\,v_1 - q\,|z|^2\bigr)}{\sqrt{\Delta}}
 \;=\; \frac{2|z|^2}{\sqrt{\Delta}}\bigl(b_{a}\,v_1 - q\bigr), \label{eq:dvdq}
\end{align}
where $g := r/|z|^2$ and $b_{a} := x/|z|^2$ denote the branch conductance and susceptance. In particular,
$\operatorname{sign}(\partial v_1/\partial p) = \operatorname{sign}(g\,v_1 - p)$ and
$\operatorname{sign}(\partial v_1/\partial q) = \operatorname{sign}(b_{a}\,v_1 - q)$. The slack magnitude $v_0$ enters only through $\Delta$; the thresholds are independent of it.
\end{theorem}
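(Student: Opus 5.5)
The plan is implicit differentiation of the exact relation \eqref{eq:R}, viewed as $F(v_1,p,q) := (v_1-a)^2 + b^2 - v_1 v_0 = 0$ with $a = rp+xq$ and $b = qr-px$.

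\emph{Step 1: the $v_1$-derivative of $F$.} We have
\[
\partial F/\partial v_1 = 2(v_1-a) - v_0 = 2v_1 - (v_0+2a).
\]
On the HV root this equals $\sqrt{\Delta}$. To see this, solve the quadratic $v_1^2 - (v_0+2a)v_1 + (a^2+b^2) = 0$; its larger root is $v_1 = \tfrac12\bigl[(v_0+2a) + \sqrt{\Delta}\bigr]$. Whenever $\Delta>0$, this derivative is strictly positive and the implicit function theorem applies, so the HV branch is a smooth function of $(p,q)$. At $\Delta=0$ (the loadability boundary) the formulas degenerate. I will state them on the open set $\Delta>0$, which is where the theorem is meaningful.

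\emph{Step 2: the partial derivatives in $p$ and $q$.} Using $\partial a/\partial p = r$ and $\partial b/\partial p = -x$,
\[
\partial F/\partial p = -2r(v_1-a) - 2xb .
\]
Substituting $a$ and $b$ gives
\[
-2\bigl[r v_1 - r^2 p - rxq - xqr + x^2 p\bigr].
\]
This does not immediately collapse to $-2(rv_1 - p|z|^2)$. The cross terms $-2rxq$ cancel as required, but the $p$ terms come out as $(x^2-r^2)p$ rather than $(r^2+x^2)p$.

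This sign bookkeeping is the main obstacle. I will therefore re-derive \eqref{eq:R} carefully and check whether the paper's $b$ should instead be $px - qr$ with a particular orientation of $\bar z$. Starting from $s\bar z = v_1 - V_1\overline{V_0}$, write
\[
s\bar z = (p+\mathrm{j}q)(r-\mathrm{j}x) = (rp+xq) + \mathrm{j}(qr-px).
\]
Hence
\[
V_1\overline{V_0} = (v_1 - a) - \mathrm{j}b, \qquad v_1 v_0 = (v_1-a)^2 + b^2 .
\]
Then $\partial b/\partial p = -x$, and
\[
\partial(b^2)/\partial p = 2b(-x) = -2x(qr - px) = -2xqr + 2x^2 p .
\]
Combining with $\partial[(v_1-a)^2]/\partial p = -2r(v_1-a) = -2rv_1 + 2r^2p + 2rxq$, the $p$ terms total $2(r^2+x^2)p$ and the $q$ terms cancel. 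So
\[
\partial F/\partial p = -2\bigl(rv_1 - |z|^2 p\bigr),
\]
and my earlier slip was in expanding $-2xb$. The same computation with $\partial a/\partial q = x$ and $\partial b/\partial q = r$ gives
\[
\partial F/\partial q = -2x(v_1-a) + 2rb = -2\bigl(xv_1 - |z|^2 q\bigr),
\]
with the cross terms again cancelling.

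\emph{Step 3: conclusion.} Applying $\partial v_1/\partial p = -F_p/F_{v_1}$ and $\partial v_1/\partial q = -F_q/F_{v_1}$ yields \eqref{eq:dvdp} and \eqref{eq:dvdq}. Factoring out $|z|^2$ gives the forms in $g$ and $b_a$. Since $\sqrt{\Delta}>0$ and $|z|^2>0$, each sign equals the sign of the bracket, which is the claimed sign statement. Finally, $v_0$ appears only in $F_{v_1}=\sqrt{\Delta}$, while the brackets $g v_1 - p$ and $b_a v_1 - q$ do not contain it. The thresholds $g v_1$ and $b_a v_1$, expressed in the local $v_1$, are therefore independent of $v_0$.
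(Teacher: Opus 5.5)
Your proof is correct and follows the paper's own argument: you implicitly differentiate the quadratic form of the two-bus relation, use $\partial F/\partial v_1 = 2v_1-(v_0+2a)=\sqrt{\Delta}>0$ on the HV root, and the $rxq$ cross terms cancel in $r(v_1-a)+xb$, so the numerator reduces to $rv_1-|z|^2p$ (respectively $xv_1-|z|^2q$). Your first expansion of $-2xb$ has a sign slip, which your second pass corrects, and your explicit restriction to $\Delta>0$ states a condition the paper uses only tacitly.
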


\begin{proof}
Differentiate $F := v_1^2 - v_1(v_0+2a) + a^2 + b^2 = 0$ implicitly; on the HV branch $\partial F/\partial v_1 = \sqrt{\Delta} > 0$. With $\partial a/\partial p = r$, $\partial b/\partial p = -x$, $\partial v_1/\partial p = 2[r(v_1-a) + bx]/\sqrt{\Delta}$, whose numerator collapses to $r v_1 - p(r^2+x^2)$---independent of $v_0$. The reactive case is analogous ($\partial a/\partial q = x$, $\partial b/\partial q = r$).
\end{proof}

\begin{corollary}[Monotonicity threshold and the $p/q$ asymmetry]\label{cor:thresh}
Componentwise monotonicity reverses exactly at
\begin{equation}
\pstar = g\,v_1 \quad\text{(active axis)}, \qquad \qstar = b_{a}\,v_1 \quad\text{(reactive axis)}.
\end{equation}
Consequently: (i) $\qstar/\pstar = b_a/g = x/r = 1/\kappa$, so on a low-$\kappa$ branch the reactive threshold stands a factor $1/\kappa$ higher than the active one; $\qstar$ coincides with the reactive support required for maximum power transfer at a voltage-controlled bus \cite[eq.~(14)]{vournas2015}---a short-circuit-scale quantity---so reactive-power monotonicity is unconditionally satisfied at feasible injections. The reactive threshold $\qstar$ therefore reproduces \cite[eq.~(14)]{vournas2015} and is not claimed as new; the contribution on the reactive axis is only its role in the $x/r$ asymmetry. (ii) $\pstar = \kappa v_1/(x(1+\kappa^2))$ collapses as $\kappa \to 0$, so low-$R/X$ branches (transformers, low-loss cables) are the generic locus of monotonicity loss---in the limit $r=0$, $\pstar = 0$ and \emph{any} net active export lowers the voltage.
\end{corollary}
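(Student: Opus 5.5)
The corollary follows from Theorem~\ref{thm:sign}, with one extra piece of structure needed to make ``reverses exactly'' meaningful. By Theorem~\ref{thm:sign}, on the HV solution the prefactor $2|z|^2/\sqrt{\Delta}$ is strictly positive wherever $\Delta>0$. So $\partial v_1/\partial p$ has the sign of $\phi(p):=g\,v_1(p,q)-p$, and $\partial v_1/\partial q$ has the sign of $\psi(q):=b_a v_1(p,q)-q$. Since $v_1$ depends on $p$, the condition $p=g v_1$ is implicit. To call $\pstar=g v_1$ the unique reversal point along the active axis at fixed $q$, I will show that $\phi$ crosses zero only once and changes sign there.

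The plan is to differentiate $\phi$ along the axis. We have $\phi'(p)=g\,\partial v_1/\partial p-1$. At any zero of $\phi$, Theorem~\ref{thm:sign} gives $\partial v_1/\partial p=0$, so $\phi'=-1<0$. Hence every zero of the continuous function $\phi$ on the connected HV domain $\{\Delta>0\}$ is a strict downward crossing. A continuous function whose zeros are all transversal downward crossings has at most one zero, so $\phi>0$ below it and $\phi<0$ above it. The same argument works verbatim for $\psi$ in $q$. Existence of the crossing before the nose ($\Delta\to0$) is not needed for the ``exactly at'' statement. Still, I would note that at $p=0$ we have $\phi=g v_1\ge0$, so monotonicity holds at zero injection and, for loads, for all $p<0$. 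I expect this uniqueness/transversality step to be the main subtlety: the theorem only gives a pointwise sign, and the self-referential threshold needs the one-crossing argument. It also needs care at $r=0$, where $\phi=-p$ directly.

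Item (i) is then algebra: $\qstar/\pstar=(x/|z|^2)/(r/|z|^2)=x/r=1/\kappa$ with $\kappa:=r/x$. This holds when both thresholds are evaluated at the same $v_1$, which I would state explicitly as the comparison being made. The identification of $\qstar=b_a v_1$ with \cite[eq.~(14)]{vournas2015} I take as cited. For ``unconditionally satisfied at feasible injections'', I would argue that $b_a v_1\approx v_1/x$ for small $\kappa$. In per-unit terms this is the short-circuit power through the branch, and it exceeds any injection on the HV branch satisfying the loadability bound $\Delta>0$. Concretely, I would check that $q=b_a v_1$ forces $\Delta$ to be small or negative for realistic $v_1\approx v_0$, rather than prove a universal inequality. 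The paper frames this as a physical scale statement, so I would present it as such.

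Item (ii) is the substitution $r=\kappa x$, which gives $|z|^2=x^2(1+\kappa^2)$ and so $\pstar=\kappa x v_1/(x^2(1+\kappa^2))=\kappa v_1/(x(1+\kappa^2))$. This tends to $0$ as $\kappa\to0$ with $v_1$ bounded. At $r=0$, Theorem~\ref{thm:sign} reduces to $\partial v_1/\partial p=-2p|z|^2/\sqrt{\Delta}$, which is negative for every $p>0$, so any net export lowers the voltage.
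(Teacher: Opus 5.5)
Your proposal is correct and follows the paper's route: the corollary is read directly off the sign factors $g\,v_1-p$ and $b_a v_1-q$ of Theorem~\ref{thm:sign}, and (i) and (ii) are the algebra $b_a/g=x/r$ and $r=\kappa x$. Your single-crossing step ($\phi'=-1$ at every zero, on the $p$-interval $\{\Delta>0\}$, which is connected because $\Delta=v_0^2+4v_0a-4b^2$ is a concave quadratic in $p$) supplies the uniqueness of the reversal that the paper leaves implicit.

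One sentence in your treatment of (i) should go: the claim that $b_a v_1$ exceeds every injection with $\Delta>0$. For $r>0$, $\partial v_1/\partial q\to-\infty$ as $q$ approaches the upper end of the HV interval in $q$, so $\psi$ does change sign on the HV branch. The reactive clause can therefore only be the short-circuit-scale physical statement the paper makes, not a universal inequality over $\{\Delta>0\}$.
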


The formulas were checked against central finite differences at $3000$ random points ($\kappa \in [10^{-3},4]$; max.\ deviation $7\times10^{-9}$). Fig.~\ref{fig:signmap} shows the sign map for the CIGRE MV transformer branch.

\begin{figure}[t]
\centering
\includegraphics[width=0.62\columnwidth]{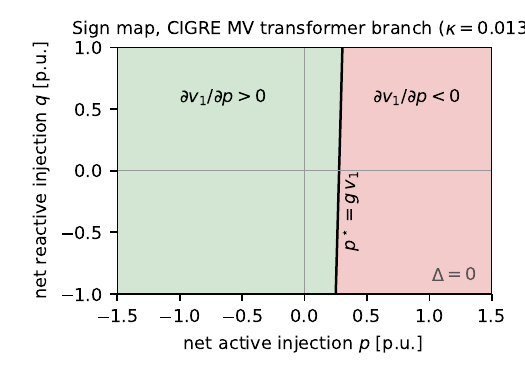}
\caption{Sign of $\partial v_1/\partial p$ in the $(p,q)$ plane for the CIGRE MV transformer branch ($\kappa = 0.013$). The solid curve is the exact threshold $p = g\,v_1(p,q)$; the dashed curve is the existence boundary $\Delta = 0$. Any operating point to the right of the threshold violates componentwise $p$-monotonicity.}
\label{fig:signmap}
\end{figure}

\section{The Mechanism: A Phasor Balance}\label{sec:physics}
The threshold is not a loss-versus-lift competition in the naive sense, and it is worth stating the mechanism precisely because a common shorthand---``the reversal is where the $I^2R$ loss overtakes the resistive rise $rp$''---is quantitatively wrong: the threshold carries $r^2+x^2$, not $r^2$. The correct reading is a phasor balance. Fixing $q$ and writing the drop $V_1 - V_0$ along and across $V_1$, the receiving magnitude obeys $v_0 = (v_1 - a)^2/v_1 + b^2/v_1$ from \eqref{eq:R}; with $a = rp+xq$ and $b = qr - px$, an increment in active export raises the \emph{longitudinal} part of the drop through $rp/|V_1|$---the resistive lift that ordinarily makes injection raise voltage---but also grows the \emph{transverse} part through $xp/|V_1|$, which rotates $V_1$ away from $V_0$ (Fig.~\ref{fig:phasor}). The slack magnitude $|V_0|$ is fixed, so it acts as a budget: the larger the transverse component, the more of that budget it consumes quadratically, and the less remains for the longitudinal projection that sets $|V_1|$. At $p = g\,v_1$ the marginal longitudinal gain and the marginal transverse cost balance exactly; past it, the projection $|V_1|$ turns down even though the injection---and the loss---keep rising. This is why the threshold involves the full $|z|^2 = r^2+x^2$: both channels are reactance-weighted. It is also the same $s\propto(r^2+x^2)$ that controls the ray curvature in \cite[eq.~(5.6)]{miu2000}, seen here as an exact componentwise boundary rather than a bound along a ray.

\begin{figure}[t]
\centering
\includegraphics[width=0.82\columnwidth]{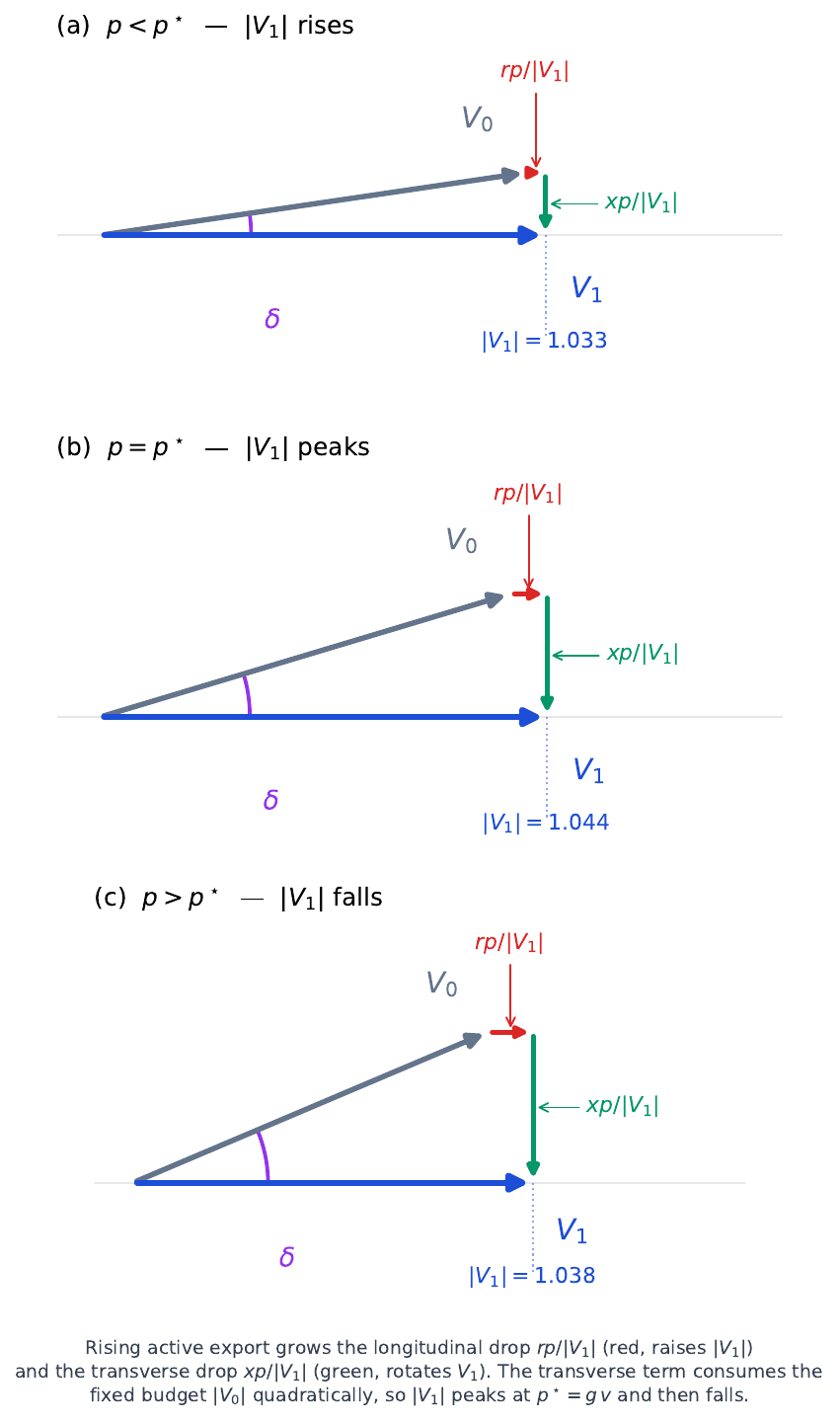}
\caption{Phasor balance on a single branch at fixed $q$, drawn for an illustrative $\kappa$ so the transverse rotation is visible (the mechanism is qualitatively identical for the CIGRE $\kappa=0.013$ branch, where the rotation is small but the same sign change occurs). Rising active export grows both the longitudinal drop $rp/|V_1|$ (red) and the transverse drop $xp/|V_1|$ (green). The transverse term consumes the fixed slack budget $|V_0|$ quadratically, so $|V_1|$ rises for $p<\pstar$, peaks at $p=\pstar=g\,v$, and falls for $p>\pstar$. Inset: at the real CIGRE $\kappa=0.013$ branch the transverse rotation is small, but the sign change is identical.}
\label{fig:phasor}
\end{figure}

Two clarifications follow from the phasor picture and separate this result from the P--V accounts \cite{iioka2019,matsumura2018,das2022}. First, the object here is the \emph{partial} derivative $\partial v_1/\partial p$ at fixed $q$, whereas a measured or plotted P--V curve traces a \emph{trajectory}---a fixed power factor $q=\alpha p$, or the inverter's apparent-power circle $p^2+q^2=S^2$. Along such a trajectory the total derivative is
\begin{equation}\label{eq:total}
\frac{\mathrm{d}v_1}{\mathrm{d}p} = \frac{\partial v_1}{\partial p} + \alpha\,\frac{\partial v_1}{\partial q},
\end{equation}
so a trajectory maximum ($\mathrm{d}v_1/\mathrm{d}p=0$, e.g.\ the $P/Q=r/x$ point of \cite[eqs.~(5)--(6)]{das2022}) mixes both partials and depends on the chosen $\alpha$; the componentwise threshold $\pstar=g\,v$ does not. An interval screen over an axis-aligned $(p,q)$ box asks exactly the componentwise question, and \eqref{eq:dvdp} answers it. Second, the reactive threshold is not a symmetric statement: because $b_a/g = 1/\kappa$, $\qstar$ sits a factor $1/\kappa$ above $\pstar$, at short-circuit scale, and is never reached at feasible injections---so the failure is an active-axis phenomenon specifically.

\section{Network-Scale Evidence}\label{sec:evidence}
The theorem predicts where componentwise monotonicity should fail: on paths crossing branches whose net active flow exceeds $g_e v$. We tested this on the IEEE 33-bus feeder \cite{baran1989} and the CIGRE MV benchmark \cite{cigre2014} in radial operation, both modelled as series $r+\mathrm{j}x$ trees with shunts neglected; the solver was cross-validated against an independent implementation of \emph{that same model} \cite{pandapower2018} to $10^{-10}$~p.u. Restoring cable capacitance and the transformer magnetising branch leaves the picture unchanged---the same DER tier, the same violating row, worst entry $-0.019$ against $-0.019$---so the neglect is not load-bearing. The protocol---scenario grid, tolerances, violation criteria, and the acceptance rules---was fixed in code before the runs and archived with every run; any modification is recorded in a deviations log, and run outputs are timestamped and never overwritten. It sweeps $770$ cells per network (load scaling $\lambda$; DER scaling $\gamma$ to $150\,\%$ of feeder peak; DER placement and power factor; heterogeneous vs.\ homogenized $R/X$), refines the last decade toward the loadability limit ($0.90$--$0.99\,\lsn$, $152$ cells), and evaluates $26\,000$ ordered injection pairs.

A \emph{candidate} is a sensitivity entry below $-\tau$, with $\tau = 10^{-8}$ set far below any magnitude of interest so that screening over-collects rather than under-collects. Discrimination is by a four-step check, applied to every candidate: (i) solver tolerance tightened to $10^{-12}$ and (ii) step refined to $10^{-7}$, lowering the floor to $\approx10^{-8}$; (iii) the sign recomputed with an \emph{independent solver} (Newton--Raphson \cite{pandapower2018}) on the same model and HV branch; (iv) the sensitivity reconstructed in closed form from \eqref{eq:chain}. All $451$ candidates survived all four: none was reclassified, the independent solver confirmed the sign on every one ($99.96\,\%$ entrywise agreement; worst entry $-1.115$ by sweep, $-1.116$ by Newton--Raphson), and the reconstruction matched to $3\times10^{-8}$. The sequence rejected nothing because there was nothing to reject: the weakest candidate is $-1.6\times10^{-4}$, four orders above $\tau$ and above the finite-difference discrepancy at the screening settings. A LinDistFlow cross-check is \emph{inadmissible} here: it is componentwise monotone by construction and can never corroborate a negative sensitivity.

On the IEEE 33-bus feeder ($\kappa \in [0.27,3.0]$, thresholds $\pstar \sim 10^2$~MW---unreachable) no violation occurs: all $770$ grid and $152$ near-boundary cells converge and pass, worst entry $+4.8\times10^{-3}$ up to $0.99\,\lsn$. On the CIGRE MV network, the tree contains the $110/20$~kV transformer branches with $\kappa = 0.013$ and $\pstar = g v \approx 2.8$~MW (at $10$~MVA base). Of the $770$ grid cells, $214$ admit \emph{no HV power-flow solution located by four independent methods} (a flat start, a $5\times10^{4}$-iteration run, a $200$-step homotopy, and a multi-start Newton solve all fail); we cannot exclude a solution beyond the reach of all four, but its absence is consistent across every method. This region is independent of $\lambda$ ($29$--$33$ cells per load tier) and governed by export and lagging power factor ($213$ of $214$ at $\gamma \ge 0.75$; none at $0.95$ leading): an export-side transfer limit across the low-$\kappa$ station branch, not a load-side loadability limit. Monotonicity is undefined there; of the remaining $556$ testable cells, $390$ exhibit structural violations (worst $-1.115$), and of the $152$ near-boundary cells $116$ converge and $61$ violate (worst $-1.918$). A per-cell attribution against exact branch flows splits the $390$ into $135$ cells in which the station-branch flow exceeds $\pstar$---onset at the lowest DER tier at which this is possible ($\gamma = 0.25$, $\approx 11$~MW), as predicted---and $255$ in which \emph{every} branch margin remains positive, an independent mechanism (Section~\ref{sec:limits}). Every negative entry lies in the active-power block: across all $1594$ evaluated cells the reactive block exhibited none, the empirical face of $\qstar$. The ordered-pair test---finite dominated increments, the form the interval assumption actually requires---separates the two orders: along the cone of \cite{dvijotham2017} \emph{no} violation occurs on either benchmark ($0/12\,000$), while componentwise the IEEE 33-bus feeder passes ($0/7000$) and CIGRE fails ($364/6000$; worst decrease $1.4\times10^{-2}$ in $v$).

\begin{figure}[t]
\centering
\includegraphics[width=0.82\columnwidth]{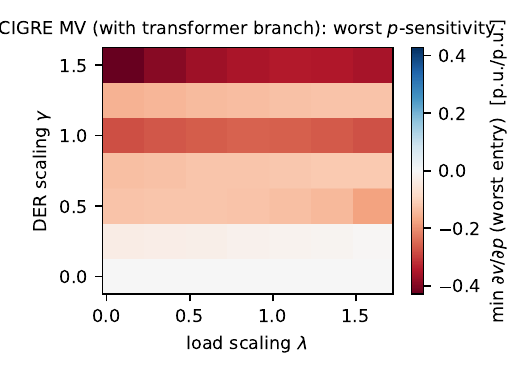}
\caption{Worst active-axis sensitivity per $(\lambda,\gamma)$ cell on the CIGRE MV network (station branch). Blue is monotone ($\partial v/\partial p>0$); red is reversed. The reversal fills the export half-plane ($\gamma$ large) and is essentially independent of load scaling $\lambda$, consistent with an export-driven branch threshold rather than a loadability limit. White cells admit no power-flow solution.}
\label{fig:heatmap}
\end{figure}

\subsection{A second independent network and a conductance sweep}\label{sec:secondnet}
To test whether the reversal is a property of the CIGRE parametrization or of the branch conductance itself, we repeated the analysis on \texttt{mv\_oberrhein}, a real German 20\,kV distribution network (179 buses, two 110/20\,kV, 25\,MVA station transformers) unrelated to the CIGRE family, and on a controlled conductance sweep. Expectations and acceptance criteria were fixed before the runs. The station transformer of \texttt{mv\_oberrhein} has $\kappa = 0.025$---nearly twice the CIGRE value---so Theorem~\ref{thm:sign} predicts a proportionally higher onset, $\pstar = g\,v \approx 5.8$~MW against $2.8$~MW on CIGRE. The measured receiving-end voltage on the station busbar peaks at a transformer export of $5.80$~MW, matching the predicted $5.84$~MW to $0.7\,\%$; the reversal is again confined to the station branch (station sensitivity $-4.4\times10^{-6}$, a distant feeder branch $+7.3\times10^{-3}$). Sweeping the station-branch conductance over $\kappa \in [0.005, 1.0]$, the observed onset tracks the predicted $g\,v$ with a maximum relative deviation of $0.004\,\%$ (Fig.~\ref{fig:kappasweep}). The threshold is thus governed by the closed form $g\,v$, not by a single network's parameters, and it transfers to an independent real network on a different transformer.

\begin{figure}[t]
\centering
\includegraphics[width=0.68\columnwidth]{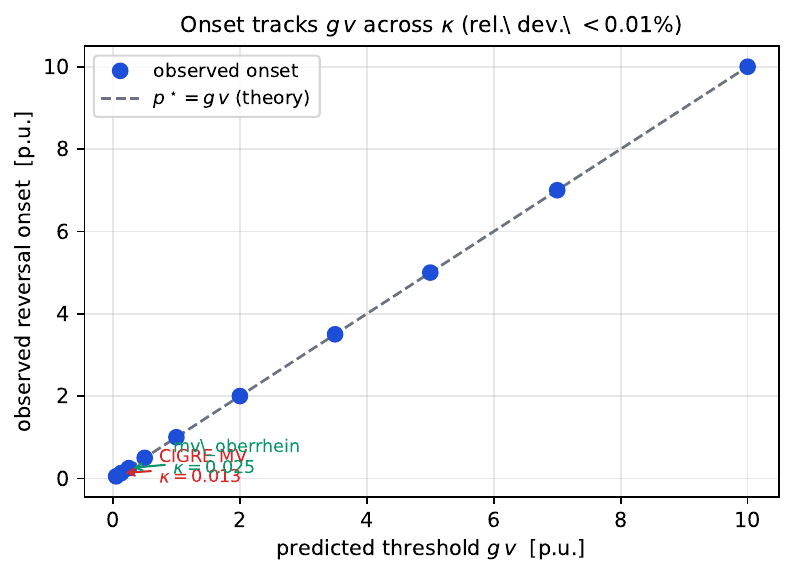}
\caption{Observed reversal onset vs.\ the predicted threshold $g\,v$ as the station-branch conductance is swept ($\kappa \in [0.005,1.0]$). Points lie on the identity line to within $0.01\,\%$. The two labelled points are the two independent real networks: CIGRE MV ($\kappa=0.013$) and \texttt{mv\_oberrhein} ($\kappa=0.025$).}
\label{fig:kappasweep}
\end{figure}

\section{Restoration Behind the Regulated Busbar}\label{sec:restoration}
Corollary~\ref{cor:thresh} suggests a delimitation: if the low-$\kappa$ branch is removed by treating the regulated MV busbar as a fixed-voltage boundary (as an on-load tap changer effectively enforces), the remaining branches carry unreachable thresholds and monotonicity should be restored. We extracted both CIGRE feeders behind their busbars and re-ran the \emph{full} protocol on each. Feeder~1 ($10$ buses, $\kappa = 0.70$) passes $770/770$ grid cells and $152/152$ last-decade cells up to $0.99\,\lsn$, worst sensitivity $+0.048$; feeder~2 ($2$ buses, $\kappa = 1.39$) likewise passes $770/770$ and $152/152$, worst $+0.089$. Neither produces a single non-converged cell, against $214$ on the full network: non-existence is itself an artifact of the station branch. The same network with the transformer branch in the tree violates $p$-monotonicity in $70\,\%$ of solvable cells (Fig.~\ref{fig:kappa}). Every violating cell involves the station branch, supporting the operational formulation: \emph{componentwise monotonicity holds per feeder behind a regulated busbar; across the station branch, every threshold-attributed violation in this protocol was preceded by the flow exceeding $g\,v$, but---as Section~\ref{sec:limits} shows---staying below $g\,v$ does not suffice.}

\begin{figure}[t]
\centering
\includegraphics[width=0.62\columnwidth]{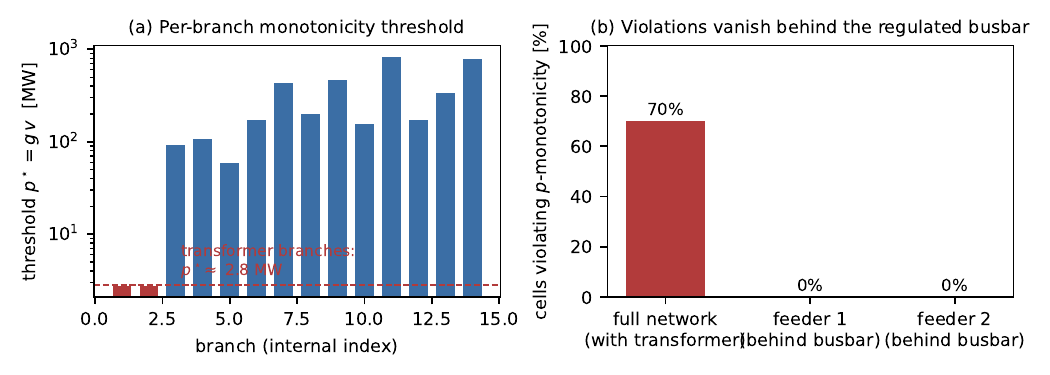}
\caption{(a) Per-branch monotonicity threshold $\pstar = g\,v$ of the CIGRE MV network (log scale); the transformer branches (red) are the only ones with reachable thresholds. (b) Fraction of \emph{solvable} operating cells violating $p$-monotonicity: full network ($390/556$; the remaining $214$ of $770$ admit no solution) vs.\ the two feeders extracted behind the regulated busbar ($0/770$ each, all cells solvable).}
\label{fig:kappa}
\end{figure}

\section{Limits of the Network Extension}\label{sec:limits}
Is the per-branch threshold \emph{sufficient} network-wide? It is not. For a root branch (upstream voltage fixed at the slack), Theorem~\ref{thm:sign} applies verbatim with the nodal injection replaced by the receiving-end flow $(F_1, G_1)$, and the chain rule gives, exactly,
\begin{equation}\label{eq:chain}
\frac{\partial v_1}{\partial p_j} = \frac{2|z_1|^2}{\sqrt{\Delta_1}}\Bigl[(g_1 v_1 - F_1)\frac{\partial F_1}{\partial p_j} + (b_{a,1} v_1 - G_1)\frac{\partial G_1}{\partial p_j}\Bigr].
\end{equation}
Both margins can be positive, far below their thresholds; the sign then rests on the flow derivatives. The active term decays as losses absorb a growing share of the injection increment; the reactive term is \emph{negative}, since reactive losses $x\,\ell$ grow with the squared current and pull the root-directed reactive flow down. Its weight $b_{a}v - G$ is of short-circuit scale---the asymmetry that makes $\qstar$ unreachable---so a lossy station branch amplifies it ($b_{a}/g = 1/\kappa \approx 75$ here). On the IEEE 33-bus feeder a single-node export at bus~18 flips sensitivities at $11.4$~MW with every branch flow below $15\,\%$ of its threshold---but at $|V| \approx 1.38$~p.u., outside any operational envelope. On the CIGRE MV network it operates \emph{inside} the protocol: the $255$ margin-positive cells of Section~\ref{sec:evidence} are its instances, $75$ with \emph{all} node voltages in $[0.90,1.10]$~p.u. Reconstructing \eqref{eq:chain} on all $390$ violating cells, for both station branches, matches the finite differences to $3\times10^{-8}$; in all $255$ the reactive term dominates and the negative entries sit at the station node. No violation was ever localized strictly inside a feeder. We therefore state the network extension as an explicit hypothesis, not a theorem: \emph{per feeder behind a regulated busbar, for operating points within an operational voltage envelope, positive margins $g_e v_e - f_e > 0$ on every branch imply componentwise monotonicity in $p$.} All feeder-level evidence is consistent with this form; the IEEE 33-bus counterexample shows the envelope qualifier cannot be dropped. Quantifying it from \eqref{eq:chain} is ongoing work.

\section{Operational Reading and Implications}\label{sec:implications}
The threshold $\pstar = g\,v$ has a direct operational reading. For any branch $e$ carrying active flow $P_e$, the dimensionless ratio
\begin{equation}\label{eq:margin}
\eta_e := \frac{P_e}{\pstar_e} = \frac{P_e}{g_e v_e}
\end{equation}
is a \emph{monotonicity margin}: while $\eta_e < 1$ on every branch of a feeder behind a regulated busbar, extreme-point screening over that feeder is valid; once $\eta_e \ge 1$ on the station branch, the corners no longer bound the interior and full nonlinear evaluation is required. Because $\pstar_e = g_e v_e$ is available in closed form from data an operator already holds---branch $r,x$ and the local voltage---$\eta_e$ can be evaluated per branch without a sweep, and flags exactly the low-$\kappa$ branches (station transformers, short thick cables) on which the reversal is reachable. A distribution-management reading of \eqref{eq:margin}---as an alarm on incremental hosting studies, and as a map of at-risk transformers---is a natural next step and is developed separately; here we note only that the flag is closed-form and cheap, in contrast to the nonlinear sweep it replaces where $\eta_e<1$. A concrete instance on \texttt{mv\_oberrhein} makes the failure explicit: over the busbar-injection interval $[18,28]$~MW, both endpoints give an identical station-busbar magnitude ($|V|=1.0195$~p.u.), so a corner-only screen reports a flat, monotone interval; yet the interior peaks at $|V|=1.0198$~p.u.\ where the station-branch flow reaches $\pstar$ ($\eta_e \approx 1$), exceeding \emph{both} corners. The excess is small in magnitude on this well-regulated network, but its sign is exactly what a monotone corner screen assumes away, and the margin $\eta_e$ flags the interval without evaluating its interior.

The mechanism is, by construction, invisible to every model affine in the injections. Such models retain the linear voltage drop but discard the quadratic, current-squared loss term---precisely the term through which the reversal runs---and their sensitivity matrices are constant and nonnegative. Screening built on them therefore certifies operating points the nonlinear system does not satisfy, precisely in the low-$\kappa$ regime where $\pstar$ is reachable and, on the CIGRE transformer branch, at nominal voltage, where no voltage limit is active to raise an alarm.

Nor is the failure removed by inverter Volt-VAR control. With the IEEE~1547 Cat.~B default $Q(V)$ characteristic on the CIGRE MV network, the loss-mediated onset is unchanged ($11$--$12$~MW of installed DER) and, at deep export, the reversal is \emph{amplified} by a factor of about $2.5$: the upper droop makes the inverters absorb reactive power into the very channel that drives it. Volt-VAR moreover introduces a \emph{second}, control-mediated violation at low penetration ($3.5$--$4.5$~MW). Its sign follows from the total derivative \eqref{eq:total} with the droop slope $\alpha = \mathrm{d}q/\mathrm{d}p < 0$ active: while an inverter sits on the lower droop, rising $p$ lifts it toward the deadband, so its reactive support $q$ falls ($\alpha<0$); on the reactive-supporting branch $\partial v_1/\partial q>0$, and once $|\alpha\,\partial v_1/\partial q|$ exceeds the still-positive $\partial v_1/\partial p$, the total derivative $\mathrm{d}v_1/\mathrm{d}p$ turns negative even though the componentwise margin has not. This is a controller-trajectory effect, distinct from the Theorem~\ref{thm:sign} threshold; a full treatment of the $Q(V)$ dynamics is left to future work. The present result complements the cone-order monotonicity of \cite{dvijotham2017}, the ray monotonicity of \cite{miu2000}, and the sufficient conditions of \cite{wcg2017}, and sharpens the measured and P--V accounts \cite{iioka2019,matsumura2018,das2022} into an exact componentwise threshold with its $p/q$ asymmetry. Proving the envelope hypothesis is part of a broader program on certified operating margins \cite{mester2026a,mester2026b}.

\section{Conclusion}\label{sec:conclusion}
Componentwise voltage monotonicity in active power is not a property of radial distribution networks; it fails, exactly, at $\pstar = g\,v$ on the receiving end of a branch, and reachably so on the low-$R/X$ station transformer at nominal voltage. The reactive threshold sits a factor $1/\kappa$ higher and is never reached. On two independent real networks the reversal appears at DER penetrations already relevant to hosting studies, is confined to the station branch, tracks the closed-form threshold $g\,v$ across a conductance sweep to within $0.01\,\%$, and is compounded rather than cured by default Volt-VAR control. Because the models most used for fast screening are affine in the injections, they cannot see it. The closed-form margin $\eta_e = P_e/(g_e v_e)$ turns the theorem into a per-branch flag for exactly where extreme-point screening ceases to certify the interior.

\end{document}